\documentclass{llncs}
\bibliographystyle{splncs03}
\usepackage{proof}
\usepackage{amsmath}
\usepackage{amssymb}
\usepackage{graphicx}

\title{Constructing Independently Verifiable Privacy-Compliant Type Systems for Message Passing between Black-Box Components}
\titlerunning{Independently Verifiable Privacy-Compliant Type Systems}
\author{Robin Adams\inst{1} \and Sibylle Schupp\inst{2}}
\institute{Chalmers University of Technology \\ \email{robinad@chalmers.se} \and Technische Universit\"{a}t Hamburg-Harburg \\
\email{sibylle.schupp@tuhh.de}}

\newtheorem{algorithm}{Algorithm}

\newcommand{\Ag}{\ensuremath{\mathtt{Ag}}}
\newcommand{\Safe}[2]{\ensuremath{\mathtt{Safe} \left( {#1} , {#2} \right)}}
\newcommand{\Pos}[2]{\ensuremath{\mathtt{Pos} \left( {#1}, {#2} \right)}}

\begin{document}
\maketitle

\begin{abstract}
 \emph{Privacy by design} (PbD) is the principle that privacy should be considered at every stage of the software engineering process.  It is increasingly both viewed as best practice
 and required by law.  It is therefore desirable to have formal methods that provide guarantees that certain privacy-relevant properties hold.  We propose an approach that
 can be used to design a privacy-compliant architecture without needing to know the source code or internal structure of any individual component.

We model an
architecture as a set of \emph{agents} or \emph{components} that pass
\emph{messages} to each other.  We present in this paper algorithms that take as input an architecture and a set of privacy constraints, and output an extension of the original architecture
that satisfies the privacy constraints.
\end{abstract}

\section{Introduction}

\emph{Privacy by Design} is the principle that privacy should be a consideration at every stage of the software design process\cite{cavoukian2012privacy}.
It is increasingly seen as best practice for privacy protection, including by the International Conference of Data Protection and Privacy Commissioners\cite{cavoukian2010resolution}
and the US Federal Trade Commission\cite{federal2012protecting}, and is a legal requirement in the EU since the General Data Protection Regulation (GDPR) came into force
on 25 May 2018\cite{eu:gdpr}.

It is therefore desirable to create methods that will provide a guarantee that software satisfies certain privacy-relevant properties.  To this end, a substantial amount
of research (both formal methods and other approaches)
has been devoted to this problem, including static analysis of source code (e.g.~\cite{ferrara2015morphdroid},\cite{cortesi2015datacentric}); real-time ``taint tracking'' of the data released by apps on
a mobile device (e.g.~\cite{schreckling2013kynoid},\cite{enck2014taintdroid}); refinement techniques that preserve privacy properties as we refine in stages from a high-level
design to code (e.g.~\cite{alur2006preserving},\cite{clarkson2010hyperproperties}); or the creation of new programming languages which include representations of privacy-relevant properties
in types or annotations (e.g.~\cite{pottier2003information},\cite{myers2001jif}).

We can thus design a privacy-safe application, or verify that a given application is privacy-safe, provided we can access and/or change its source code.
However, in practice, many systems involve the interaction of different components, each controlled by a different person or organisation.  The source code might not be available, or
it might not be possible for us to change it.  New versions of each component may come out regularly, so that a privacy analysis we did using an old component quickly becomes obsolete.

% Suppose, for example, we are in the situation where we want to design a system with many different components, such as a smart home.  We wish to verify a particular property, such as
% `component $\alpha$ (an advertiser) should never be able to see the piece of data $d$ (say, the house owner's credit card number)'.
% Some of these components may be provided by third parties, so we cannot control or modify their code.  They may be closed-source, so that we cannot analyse their source code.  It could be that it is impossible to provide a formal proof that an
% individual component satisfies a certain property, because its code is not written in a form amenable to analysis.  We may not even know something as basic as how many components there will be (e.g. we may want
% to ensure the privacy property holds even after the owner buys a new smart device in the future).

In this paper, we will show how we can design
a type system for the \emph{messages} that the components pass to each other, in such a way that we can formally prove that, if every message passed is typable under this typing system,
then the privacy property must hold.  We indicate how an existing unsafe component can be adapted into a component that uses this typing system by providing each component with an \emph{interface}
through which all messages must pass, without needing to read or modify the component's source code.
%We also provide suggestions on how this could be implemented in such a way
%that a third party (e.g. the user or a data protection authority) can verify that the system guarantees the privacy property, without needing to see either the source code or
%the content of any of the messages.

The structure of the paper is as follows.  In Section \ref{section:example}, we give a relatively simple but realistic example of privacy constraints
that we may wish to hold, and show the architecture that our algorithms generate.  In Section \ref{section:arch}, we provide the formal definition of architecture that we use.
In Section \ref{section:firstalgo}, we define the algorithm for a simple constraint language and prove it correct.  In Section \ref{section:secondalgo}, we do the same development
again for a stronger language of constraints, of the form $\alpha \ni A \Rightarrow \beta \ni B$ (`if $\alpha$ possesses a term of type $A$
then $\beta$ must previously have possessed a term of type $B$').  Finally we survey some related work in Section \ref{section:related}, and conclude in Section \ref{section:conclusion}.

\section{Motivating Example}
\label{section:example}

We now give an example of realistic privacy constraints that we might wish to introduce, and the architectures that are produced by our algorithms.
The example is similar to an example considered by Barth et al. \cite{contextualintegrity}.

The US Children's Online Privacy Protection Act (COPPA) includes the clause:

\begin{quote}
  When a child sends protected information to the website, a \emph{parent} must have previously received a privacy notice from the web site operator, [and] granted consent to the web site operator.
\end{quote}

We propose to model a system as being composed of \emph{agents} or \emph{components} who pass \emph{messages} to each other.
The possible messages are provided by a \emph{type system}, which consists of a set of \emph{types} and a set of \emph{constructors}.
These two sets determine the set of \emph{terms}, each of which has a type. We write $t : A$ to denote that the term $t$ has type $A$.

A \emph{message} is a triple ($\alpha$, $t$, $\beta$), where
$\alpha$ and $\beta$ are agents and $t$ is a term; this represents the agent $\alpha$ sending the piece of data $t$ to $\beta$.  If $t : A$,
then we write this message as $\alpha \stackrel{t}{\rightarrow} \beta$ or $\alpha \stackrel{t : A}{\rightarrow} \beta$.

For the COPPA example, Figure \ref{fig:coppa1} suggests an architecture with three agents, Child, Website, and Parent.
In the initial state, Child possesses a term info : INFO, Website possesses policy : POLICY, and Child may send messages of type INFO to Website, etc.
This represents a website which can send its privacy policy to the parent; the parent may send consent for the website to collect the child's protected info;
and the child may send their protected info to the website.
However, at the moment, there is nothing to prevent the protected info being sent to the website without either policy or consent having been sent.

Formally, an \emph{architecture} is described by specifying the following (see Definition \ref{df:arch}):
\begin{itemize}
\item
for any agent $\alpha$, which constructors an agent possesses in the initial state;
\item
for any two agents $\alpha$, $\beta$, the set of types $A$ such that $\alpha$ may pass a message of type $A$ to $\beta$.
\end{itemize}
If $A$ is a type, we shall sometimes say '$\alpha$ can send $A$ to $\beta$' to mean '$\alpha$ may send messages of type $A$ to $\beta$'.

We envision the designer beginning with a set $\Ag$ of agents and a type system $\mathcal{T}$ which describes the pieces of data they are interested in.
They write down the set $\mathcal{C}$ of privacy constraints that they wish the finished system to have.  For now, we consider constraints of these two forms
(see Definitions \ref{df:constraint} and \ref{df:constraint2}):
\begin{itemize}
 \item $\alpha \ni A \Rightarrow B$: If agent $\alpha$ has a piece of data of type $A$, then a piece of data of type $B$ must have previously been created.
 \item $\alpha \ni A \Rightarrow \beta \ni B$: If agent $\alpha$ has a piece of data of type $A$, then agent $\beta$ must previously have had a piece of data of type $B$.
\end{itemize}

The privacy constraints that we require for the architecture in Figure \ref{fig:coppa1} include

$$Website \ni INFO \Rightarrow Website \ni CONSENT$$
$$Website \ni CONSENT \Rightarrow Parent \ni POLICY$$

The first constraint specifies that agent Website possesses INFO only if it previously has received data of type CONSENT.
The second constraint specifies that agent Website possesses CONSENT only if the Parent agent has received the POLICY before.
(We will add a third constraint later, in Section \ref{section:revisit}.)

\begin{figure}
 \includegraphics[width=0.95\textwidth]{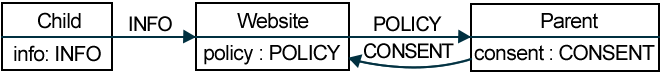}
 \caption{An Architecture That Allows Privacy Breach}
 \label{fig:coppa1}
\end{figure}

Given privacy constraints, we show how to extend $\mathcal{T}$ to a type system $\mathcal{T}_\mathbb{C}$.
The type system $\mathcal{T}_\mathbb{C}$ includes a set of new types $C_\alpha(A)$.  A term of type $C_\alpha(A)$ is called
a \emph{certified} term.  As well as the plain INFO type,
for example, the safe architecture contains the type $C_{Website}(INFO)$.  A term of this type represents
a piece of data from which $Website$ can extract a term of type $INFO$, but no other agent can.\footnote{In practice, this would presumably be achieved
by encryption, but we abstract from these implementation details here.  See Section \ref{section:implementation} for more discussion.}
There are no restrictions on which agents may receive them or send certified terms.

The type system $\mathcal{T}_\mathbb{C}$ also has types $P_\alpha(A)$, and constructors $p_\alpha$ that construct terms of type $P_\alpha(A)$.
We may think of a term of type $P_\alpha(A)$ as a \emph{proof} that $\alpha$ possesses a term of type $A$.

The architecture created by our Algorithm 2 is shown in Fig. \ref{fig:coppa2}.  (For space reasons, we have listed only some of the constructors and messages, and
omitted the subscripts on the types $C_\alpha(A)$ and $P_\alpha(A)$.)  The algorithm creates new components $\mathit{IWebsite}$, the \emph{input interface} to $\mathit{Website}$, and
$\mathit{OWebsite}$, the \emph{output interface} for $\mathit{Website}$; and similarly input and output interfaces for $\mathit{Parent}$ and $\mathit{Child}$.

The constructor $p_{POLICY}$ takes a term of type $POLICY$ and constructs a term of type $P_{Parent}(POLICY)$ --- a proof that $Parent$ has received a term of type $POLICY$.
The constructor $m_{CONSENT}$ constructs a certified term of type $C_{Website}(CONSENT)$ out of a term of type $CONSENT$, plus the proof that the preconditions for $Website$ to be
allowed to read a term of type $CONSENT$, namely a term of type $P_{Parent}(POLICY)$.  The constructor $\pi_{CONSENT}$ then extracts the term of type $CONSENT$ from the certified term.
Similar comments hold for $m_{POLICY}$ and $\pi_{POLICY}$, and the other new constructors in Fig. \ref{fig:coppa2}.

It can be seen that, while $\mathit{Child}$ may send $\mathit{INFO}$ to $O\mathit{Child}$ at any time, the only way for
the data to travel any further is for a term of type $C_{\mathit{Website}}(\mathit{INFO})$ to be created; this can only happen if a term of type $P_{\mathit{Website}}(\mathit{CONSENT})$ has been created;
this can only happen if a term of type $\mathit{CONSENT}$ reaches $O\mathit{Website}$; and this can only happen if $\mathit{Website}$ has a term of type $\mathit{CONSENT}$.  Similar considerations hold
for our other negative constraint.

\begin{figure}
 \includegraphics[width=0.95\textwidth]{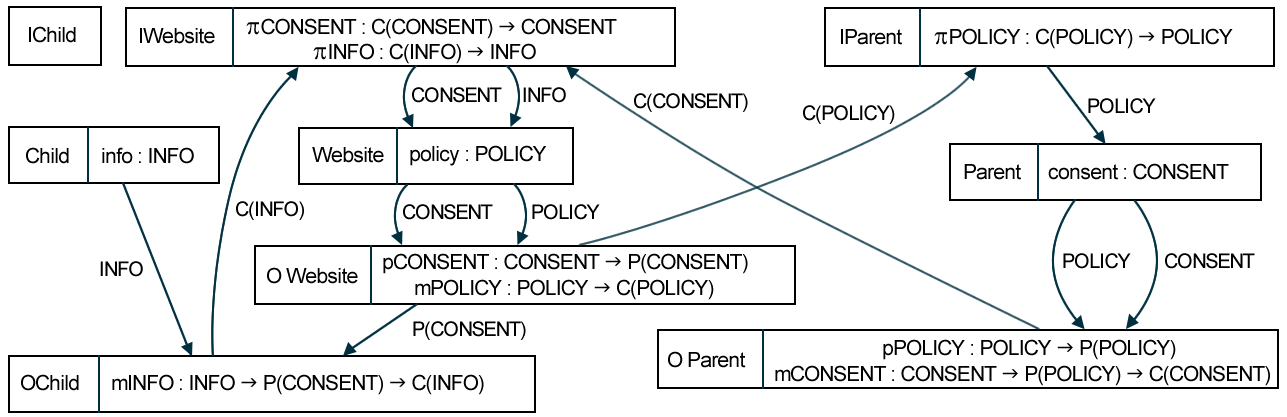}
 \caption{A Privacy-Safe Architecture}
 \label{fig:coppa2}
\end{figure}

%\subsection{Implementation}
%Without going into details, we can give some indications as to how the type system we propose could be implemented.  We can assign to each agent in the original $\Ag$ a
%public key and a private key.  The private key of $\alpha$ is then accessible by any of the agents in $P_\alpha$; the public key is available to all.  An argument of
%type $C_\alpha(A)$ is a tuple consisting of a term of type $A$ encrypted using the public key of $\alpha$, together with the appropriate terms of type $P_{\beta_1}(B_1)$,
%\ldots, $P_{\beta_n}(B_n)$.  A term of type $P_\alpha(A)$ should provide a zero-knowledge proof that its creator had access to a term of type $A$, and be digitally signed
%using the keys of $\alpha$.  This would be one way to implement the above requirements; it may be possible to invent others.

We can partition the agents in Fig. \ref{fig:coppa2} into three sets:
$\{ Child, IChild, OChild \}$, $\{ Website, IWebsite, OWebsite \}$, $\{ Parent, IParent, OParent \}$.
Each set thus consists of one of the agents from Fig. \ref{fig:coppa1}, plus its two new interfaces.  Note that, if an agent from one set passes a message to an agent in another
set, then that message has type $C_\alpha(A)$ or $P_\alpha(A)$ for some $\alpha$, $A$.  In the rest of this paper, we will prove two results (Theorems \ref{theorem:safeone} and \ref{theorem:safetwo})
that give general conditions such that, if an architecture can be partitioned in a way that satisfies these conditions,
then a given set of privacy constraints are satisfied.

\subsection{Note on Implementation}
\label{section:implementation}

In practice, certification on the one hand, access on the other hand, could be implemented through encryption and decryption.
But, other mechanisms possibly exist as well.  The type systems we present in this paper abstract from these details.  They specify which agents
may and may not access which data, without specifying how this is to be done.

The terms of type $P_\alpha(A)$ should, in practice, ideally be an appropriate zero-knowledge proof which guarantees that $\alpha$ possesses a term of type $A$,
without revealing the value of the term of type $A$.  Again, in this paper we abstract from the details of how this would be implemented.

However, we expect it to be possible to implement these types in such a way that the designer could publish both the set of constraints $\mathbb{C}$ and the
type system $\mathcal{T}_\mathbb{C}$, and an independent third party (the user, a regulatory authority, or anyone else) to verify both that our algorithm maps
$\mathbb{C}$ to $\mathcal{T}_\mathbb{C}$, and that any given message is typable under $\mathcal{T}_\mathbb{C}$.  This would greatly increase the trust that all parties
can have that the global privacy policies $\mathbb{C}$ hold true.

We also note that, if there are large numbers of agents in our system, we will need a large number of types.  In our motivating example, if we have many children and many parents, then
we will need types $C_{CHILD_1}(CONSENT)$, $C_{CHILD_2}(CONSENT)$, etc. and a way to ensure that $\pi_{\alpha A}$ accepts terms of type $C_{\alpha'}(A)$ only if $\alpha = \alpha'$,
requiring the use of dependent types.  For now, this is left as work for the future.

\section{Architectures}
\label{section:arch}

We now describe the language we use for specifying architectures.  This system was inspired by work by le M\'{e}tayer et al \cite{pbd} and Barth et al \cite{contextualintegrity}.

An \emph{architecture} consists of \emph{agents} who pass \emph{messages} to each other.  Each message is a term that can be typed in a \emph{type system}.

\begin{definition}[Type System]
 A \emph{type system} is given by the following:
 \begin{itemize}
  \item A set of \emph{atomic types}.  The set of \emph{types} is the defined inductively by:
  \begin{itemize}
  \item Every atomic type is a type.
  \item If $A$ and $B$ are types, then $A \rightarrow B$ is a type.
 \end{itemize}
 \item
 A set of \emph{constructors}, each with an associated type.
 \end{itemize}
The set of \emph{terms} of each type is then defined inductively by:
\begin{itemize}
 \item Every constructor of type $A$ is a term of type $A$.
 \item If $s$ is a term of type $A \rightarrow B$ and $t$ is a term of type $A$, then $st$ is a term of type $B$.
\end{itemize}
We write $t : A$ to denote that $t$ is a term of type $A$.
\end{definition}

In the example in Figure \ref{fig:coppa1}, the atomic types are $INFO$, $CONSENT$ and $POLICY$.  The constructors are
$info$ which has type $INFO$, $policy$ which has type $POLICY$, and $consent$ which has type $CONSENT$.
In the example in Figure \ref{fig:coppa2}, the architecture has been extended with new atomic types such as $C_{Website}(INFO)$, and new constructors
such as $p_{POLICY}$, which has type $POLICY \rightarrow P_{Parent}(POLICY)$.

\begin{definition}[Architecture]
  \label{df:arch}
 Given a type system $T$, an \emph{architecture} $\mathcal{A}$ over $T$ consists of:
 \begin{itemize}
  \item a set $\Ag$ of \emph{agents} or \emph{components};
  \item for every agent $\alpha$, a set $H_\alpha$ of constructors that $\alpha$ \emph{initially possesses} or \emph{initially has};
  \item for every ordered pair of distinct agents $(\alpha, \beta)$, a set $M_{\alpha \beta}$ of \emph{atomic} types that $\alpha$ may send in a \emph{message} to $\beta$.
 \end{itemize}
 We shall write $\alpha \stackrel{A}{\rightarrow} \beta$ to denote that $A \in M_{\alpha \beta}$.
\end{definition}

(Note that only terms of atomic type can be passed between agents.)

In the example in section 2, we have $\Ag = \{ Child, Website, Parent \}$.  The agent $Child$ initially possesses the constructor $info$, and $Website$ initially possesses $policy$,
and $Parent$ initially possesses $consent$.  We have $M_{Child, Website} = \{ INFO \}$; thus, $Child$ may send messages of type $INFO$ to $Website$.  We also have
$M_{Website, Parent} = \{ POLICY \}$ and $M_{Parent, Website} = \{ CONSENT \}$.

We will use lower-case Greek letters $\alpha$, $\beta$, \ldots for agents, lower-case Roman letters $s$, $t$, \ldots for terms, and capital Roman letters $A$, $B$, \ldots for types.
The letter $c$ is reserved for constructors.

Let us say that an agent $\alpha$ can \emph{compute} terms of type $A$ iff it possesses a constructor of type $B_1 \rightarrow \cdots \rightarrow B_n \rightarrow A$ for some $B_1$, \ldots, $B_n$.

\begin{definition}
 Let $\mathcal{A}$ be an architecture.
 \begin{enumerate}
  \item An \emph{event} or \emph{message} is an expression of the form $\alpha \stackrel{t : A}{\rightarrow} \beta$, to be read as `$\alpha$ passes the term $t$ of type $A$ to $\beta$.'
  \item A \emph{trace} $\tau$ is a finite sequence of events.
  \item A \emph{judgement} is an expression of the form $\tau \vdash \alpha \ni t : A$, which we read as ``After the trace $\tau$, $\alpha$ has the term $t$ of type $A$.''
 \end{enumerate}
 We write $\tau_1, \tau_2$ for the concatenation of traces $\tau_1$ and $\tau_2$.
 We write $\tau_1 \sqsubseteq \tau'$ iff $\tau_1$ is a prefix of $\tau'$, i.e. there exists $\tau_2$ such that $\tau' = \tau_1, \tau_2$.
\end{definition}

The \emph{derivable} judgements are given by the rules of deduction in Figure \ref{fig:rules}.
We say that $\tau$ is a \emph{valid} trace through $\mathcal{A}$ iff $\tau \vdash \alpha \ni t : A$ is derivable for some $\alpha$, $t$, $A$.
We say that an agent $\alpha$ \emph{possesses} a term of type $A$ after $\tau$, and write $\tau \vdash \alpha \ni A$, iff there exists a term $t$ such that $\tau \vdash \alpha \ni
t : A$.
We say that there \emph{exists} a term of type $A$ after $\tau$, and write $\tau \vdash A$, iff $\tau \vdash \alpha \ni t : A$ for some $\alpha$, $t$.

\begin{figure}
\[ (init) \; \vcenter{\infer[(c : A \in H_\alpha)]{\vdash \alpha \ni c : A}{}} \quad
(message_1) \; \vcenter{\infer[(A \in M_{\alpha \beta})]{\tau, \alpha \stackrel{t : A}{\rightarrow} \beta \vdash \beta \ni t : A}{\tau \vdash \alpha \ni t : A}} \]
\[ (message_2) \; \vcenter{\infer[(A \in M_{\alpha \beta})]{\tau, \alpha \stackrel{t : A}{\rightarrow} \beta \vdash \gamma \ni s : C}{\tau \vdash \alpha \ni t : A \qquad \tau \vdash \gamma \ni s : C}} \]
\[ (func) \; \vcenter{\infer{\tau \vdash \alpha \ni ft : B}{\tau \vdash \alpha \ni f : A \rightarrow B \qquad \tau \vdash \alpha \ni t : A}} \]
\caption{Rules of Deduction}
\label{fig:rules}
\end{figure}

The rule (init) states that, if $\alpha$ initially possesses $c$, then $\alpha$ possesses $c$ in the initial state.  The rule (func) states that, if an agent possesses
both a function $f$ and term $t$ of the appropriate types, it may compute the term $ft$.  The rule (message$_1$) states that, after $\alpha$ has sent $t$
to $\beta$, then $\beta$ possesses $t$.  The rule (message$_2$) states that, if $\gamma$ possesses $s$ before $\alpha$ sends a message to $\beta$, then
$\gamma$ still possesses $s$ after the message is sent.

\subsection{Metatheorems}

We can establish the basic properties that our typing system satisfies.

\begin{lemma}$ $
  \begin{enumerate}
    \item \textbf{Weakening}
 Suppose $\tau_1 \vdash \alpha \ni t : A$ and $\tau_1, \tau_2$ is a valid trace.  Then $\tau_1, \tau_2 \vdash \alpha \ni t : A$.
\item
\label{lm:tracevalid}
If $\tau_1, \alpha \stackrel{t : A}{\rightarrow} \beta, \tau_2$ is a valid trace, then $A \in M_{\alpha \beta}$, and $\tau_1 \vdash \alpha \ni t : A$.
\item \textbf{Generation}
\label{lm:gen}
 Suppose $\tau \vdash \beta \ni t : B$.  Then there exist terms $t_1 : A_1$, \ldots, $t_m : A_m$ ($m \geq 0$) and agents $\alpha_1$, \ldots, $\alpha_n$ ($n \geq 1$) such that $t \equiv f t_1 \cdots t_m$, $\beta = \alpha_n$,
 and the following events occur in $\tau$ in order:
 \[ \alpha_1 \ni f : A_1 \rightarrow \cdots \rightarrow A_n \rightarrow B, \; \alpha_1 \stackrel{t : B}{\rightarrow} \alpha_2, \; \cdots, \; \alpha_{n-1} \stackrel{t : B}{\rightarrow} \alpha_n \]
Further, we have $\tau \vdash \alpha_1 \ni t_1 : A_1$, \ldots, $\tau \vdash \alpha_1 \ni t_m : A_m$.
\item
 If $\tau \vdash \beta \ni t : B$, then either $\beta$ can compute $B$, or there is an event $\alpha \stackrel{t : B}{\rightarrow} \beta$ in $\tau$ for some $\alpha$.
\end{enumerate}
\end{lemma}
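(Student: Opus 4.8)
The plan is to establish the four parts in the stated order, since Weakening is used inside Generation, and the whole argument rests on one \emph{inversion} observation: the only rules of Figure~\ref{fig:rules} that derive a judgement with a non‑empty trace are $(message_1)$ and $(message_2)$, and in both the trace ends in an event $\gamma \stackrel{s : C}{\rightarrow} \delta$ carrying the side condition $C \in M_{\gamma\delta}$ and having a premise $\tau' \vdash \gamma \ni s : C$, where $\tau'$ is the trace with that last event deleted. In particular every prefix of a valid trace is valid, and the payload type of any event in a valid trace is atomic (since every $M_{\alpha\beta}$ consists of atomic types).

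For \textbf{Weakening} I would induct on the length of $\tau_2$; the empty case is trivial. If $\tau_2 = \tau_2', \gamma \stackrel{s : C}{\rightarrow} \delta$, then inversion applied to the assumed validity of $\tau_1, \tau_2$ yields that $\tau_1, \tau_2'$ is valid, that $C \in M_{\gamma\delta}$, and that $\tau_1, \tau_2' \vdash \gamma \ni s : C$. The induction hypothesis gives $\tau_1, \tau_2' \vdash \alpha \ni t : A$, and one application of $(message_2)$ — with $\gamma, s, C, \delta$ as sender/payload/type/recipient and $\alpha, t, A$ as the bystander — yields $\tau_1, \tau_2 \vdash \alpha \ni t : A$. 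Part~\ref{lm:tracevalid} is then immediate: $\tau_1, \alpha \stackrel{t : A}{\rightarrow} \beta$ is a prefix of a valid trace, hence valid, and inverting any derivation witnessing this delivers both $A \in M_{\alpha\beta}$ and $\tau_1 \vdash \alpha \ni t : A$.

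For \textbf{Generation} I would induct on the derivation of $\tau \vdash \beta \ni t : B$. In case $(init)$, $\tau$ is empty and $t$ is a constructor $f : B \in H_\beta$; take $m = 0$, $n = 1$, $\alpha_1 = \beta$. In case $(func)$, $t \equiv g u$ with premises $\tau \vdash \beta \ni g : A' \rightarrow B$ and $\tau \vdash \beta \ni u : A'$; since $A' \rightarrow B$ is not atomic and no event in the valid trace $\tau$ has a non‑atomic payload, the chain produced for $g$ by the induction hypothesis has length $1$, so $\beta$ itself holds the head constructor $f$ (with $f \in H_\beta$) and every argument of $g$, and possesses each of them after $\tau$; append $u$ as a final argument, keep $n = 1$, and read the judgement on this new argument off the second premise. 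In case $(message_1)$, the conclusion is $\tau', \alpha \stackrel{t : B}{\rightarrow} \beta \vdash \beta \ni t : B$ from $\tau' \vdash \alpha \ni t : B$; apply the induction hypothesis to get a chain $\alpha_1, \dots, \alpha_{n'} = \alpha$ whose events occur in $\tau'$, append $\beta$ to the chain and $\alpha \stackrel{t : B}{\rightarrow} \beta$ to the event list, and promote each $\tau' \vdash \alpha_1 \ni t_i : A_i$ to $\tau \vdash \alpha_1 \ni t_i : A_i$ by Weakening (the trace $\tau$ of the conclusion is valid). In case $(message_2)$ the subject $\beta \ni t : B$ is untouched by the last event, so the induction hypothesis transfers directly, again using Weakening for the argument judgements and the fact that events occurring in the prefix $\tau'$ still occur in order in $\tau$.

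Part~4 then falls out of Generation: with $t \equiv f t_1 \cdots t_m$ and chain $\alpha_1, \dots, \alpha_n = \beta$, if $n = 1$ then $\beta = \alpha_1$ possesses the constructor $f$ of type $A_1 \rightarrow \cdots \rightarrow A_m \rightarrow B$, so $\beta$ can compute $B$; if $n \geq 2$ then $\alpha_{n-1} \stackrel{t : B}{\rightarrow} \beta$ is an event of $\tau$. I expect the only genuine obstacle to be the $(func)$ case of Generation, where one must see that the head constructor of a term is, from the outset, in the hands of a single agent $\alpha_1$ — a consequence of the fact that function‑typed (partially applied) terms can never be transmitted, so a term cannot be assembled co‑operatively by several agents. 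The rest — threading the agent chain through the two $(message)$ cases and checking the ``occurs in order'' clause — is routine bookkeeping once inversion and Weakening are in hand.
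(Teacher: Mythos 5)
Your proof is correct and takes essentially the same route as the paper, which merely states that parts 1--3 are ``by straightforward induction on derivations'' and that part 4 follows from part 3; your fleshed-out inversion argument, the treatment of the $(func)$ case via the observation that function-typed terms cannot be transmitted, and the derivation of part 4 from Generation are exactly the intended details. One small imprecision: $(func)$ can also derive a judgement with a non-empty trace (it preserves the trace of its premises), so your inversion step should be phrased as walking up through any $(func)$ applications until the rule that introduced the last event --- necessarily $(message_1)$ or $(message_2)$ --- is reached; the conclusions you draw from inversion are unaffected.
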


Intuitively, Generation says that if agent $\beta$ possesses a piece of data of type $B$, then it must have been computed by an agent $\alpha_1$ that can compute terms of type $B$, and then
passed to $\beta$ in a sequence of messages.

The proofs of the first three properties are by straightforward induction on derivations.
Part 4 follows easily from part 3.

\section{The First Algorithm}
\label{section:firstalgo}

In the rest of this paper, we will consider different sets of \emph{constraints} that we may wish to place on our architectures.  In each case, we shall show how, given an architecture $\mathcal{A}$ and a set of constraints
$\mathbb{C}$, we can construct an architecture $\mathcal{B}$, which we call a \emph{safe} architecture, that extends $\mathcal{A}$ and satisfies all the constraints.

For our first algorithm, we consider the following constraints:

\begin{definition}[Constraint]
  \label{df:constraint}
 \begin{enumerate}
  \item \label{df:negcon} A \emph{negative constraint} has the form $\alpha \ni A \Rightarrow B$, where $A$ and $B$ are atomic types.  We read it as: ``If $\alpha$ receives a message of type $A$, then a term of type $B$ must have previously been created.''
  A trace $\tau$ \emph{complies} with this constraint iff, for every $\tau_1 \sqsubseteq \tau$, if $\tau_1 \vdash \alpha \ni t : A$ for some $t$, then $\tau_1 \vdash \beta \ni s : B$ for some $\beta$, $s$.
  \item A \emph{positive constraint} has the form $\mathtt{Pos}(\alpha, A)$, where $A$ is an atomic type.  We read it as: ``It must be possible for $\alpha$ to have a term of type $A$.''  A trace $\tau$ \emph{complies} with this constraint iff $\tau \vdash \alpha \ni t : A$ for some term $t$.
\end{enumerate}
\end{definition}

\paragraph{Note}
To understand part \ref{df:negcon} of this definition, note that, if it is possible to create a term $t : A$ without first creating a term $s : B$,
then there is a trace $\tau$ such that $\tau \vdash \alpha \ni t : A$ for some $\alpha$, and $\tau \nvdash \beta \ni s : B$ for all $\beta$.  Thus, the condition
``For every $\tau_1 \sqsubseteq \tau$, if $\tau_1 \vdash \alpha \ni t : A$ for some $t$, then $\tau_1 \vdash \beta \ni s : B$ for some $\beta$, $s$'' captures
the idea ``If $\alpha$ receives a message of type $A$, then a term of type $B$ must have previously been created.''

\paragraph{Example}
Consider an accountancy firm collecting personal data from the employees of a company in order to prepare a tax report.  The principle of
\emph{data minimization} \cite[Section 25]{eu:gdpr} states that the accountancy firm should collect only the data that is necessary for this purpose.
We can model this as follows: assume there are two types of tax return that can be prepared, $TR_A$ and $TR_B$.  Let $Employee$ initially possess $a : A$ and $b : B$,
where $a$ is required to prepare $TR_A$ and $b$ is required to prepare $TR_B$.  The company can send requests $Q_A$ and $Q_B$ to $Accountancy$, requesting a tax
return of one of the two types.  We could then write constraints $Accountancy \ni A \Rightarrow Q_A$ and $Accountancy \ni B \Rightarrow Q_B$ to express that
the accountancy firm may only possess an employee's personal data if it is necessary for a tax return that it has been requested to prepare.

We now construct the type system that the safe architecture will use:

\begin{definition}[Safe Type System]
 Let $\mathcal{T}$ be a type system and $\Ag$ a set of agents.  Let $\mathbb{C}$ be a finite set of negative constraints over $\mathcal{T}$ and $\Ag$.
 The \emph{safe} type system $\mathcal{T}_{\mathbb{C}}$ is defined as follows.
 \begin{itemize}
  \item The atomic types of $\mathcal{T}_{\mathbb{C}}$ are the atomic types of $\mathcal{T}$ together with, for every agent $\alpha \in \Ag$ and atomic type $A$ in $\mathcal{T}$,
  a type $C_{\alpha}(A)$, the type of \emph{certified} terms of type $A$ that may only be read by $\alpha$.
  \item Every constructor of $\mathcal{T}$ is a constructor of $\mathcal{T}_{\mathbb{C}}$.
  \item For every $\alpha \in \Ag$ and type $A$ of $\mathcal{T}$, let the constraints in $\mathbb{C}$ that begin with `$\alpha \ni A$' be
  \[ \alpha \ni A \Rightarrow B_1, \ldots, \alpha \ni A \Rightarrow B_n \enspace . \]
  Then the following are constructors of $\mathcal{T}_{\mathbb{C}}$:
  \begin{align*}
  & m_{\alpha A}^{\beta_1 \cdots \beta_n} : A \rightarrow C_{\beta_1}(B_1) \rightarrow \cdots \rightarrow C_{\beta_n}(B_n) \rightarrow C_\alpha(A) \text{ for all } \beta_1, \ldots, \beta_n \in \Ag; \\
  & \pi_{\alpha A} : C_\alpha(A) \rightarrow A
  \end{align*}
 \end{itemize}
\end{definition}

The intention is that $m_{\alpha A}^{\beta_1 \cdots \beta_n}$ constructs a term of type $C_\alpha(A)$ out of a term of type $A$ and $n$ other terms which
prove that the preconditions to $\alpha \ni A$ are all satisfied.  The constructor $\pi_{\alpha A}$
then extracts the term of type $A$ again.

Using the type system, we can state a set of conditions that guarantee that an architecture satisfies the negative constraints in $\mathbb{C}$.

\begin{theorem}
\label{theorem:safeone}
 Let $\mathcal{T}$ be a type system, $\Ag$ a set of agents, and $\mathbb{C}$ a set of negative constraints over $\mathcal{T}$ and $\Ag$.
 Let $\mathcal{B}$ be an architecture over $\mathcal{T}_{\mathbb{C}}$ with set of agents $\Ag'$, where $\Ag \subseteq \Ag'$.  Suppose
there is a partition $\{ \mathcal{P}_\alpha \subseteq \Ag' \}_{\alpha \in \Ag}$ of $\Ag'$ indexed by $\Ag$ such that:
\begin{enumerate}
\item
$\alpha \in \mathcal{P}_\alpha$ for all $\alpha \in \Ag$;
\item
If $\beta \stackrel{A}{\rightarrow} \beta'$ and $\beta$, $\beta'$ are in different sets of the partition, then $A$ has the form $C_\gamma(B)$ for some $\gamma$, $B$;
\item
If $\beta$ initially possesses $\pi_{\alpha A}$ then $\beta \in \mathcal{P}_\alpha$;
\item
For every constraint $\alpha \ni A \Rightarrow B$ in $\mathbb{C}$, if an agent $\beta \in \mathcal{P}_\alpha$ possesses a constructor with target $A$, then this constructor
is $\pi_{\alpha A}$.
 \end{enumerate}
Then every trace through $\mathcal{B}$ satisfies every negative constraint in $\mathbb{C}$.
\end{theorem}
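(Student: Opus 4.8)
The plan is to prove the following statement, which immediately yields the theorem: for every negative constraint $(\alpha \ni A \Rightarrow B) \in \mathbb{C}$ and every valid trace $\tau$ through $\mathcal{B}$, if $\tau \vdash \alpha \ni t : A$ for some $t$, then $\tau \vdash \gamma \ni s : B$ for some $\gamma$, $s$. Compliance in the sense of Definition~\ref{df:constraint} then follows at once, since for any $\tau_1 \sqsubseteq \tau$ the hypothesis $\tau_1 \vdash \alpha \ni t : A$ already makes $\tau_1$ a valid trace. The whole argument is a provenance chase through the rules of Figure~\ref{fig:rules}, driven by the Generation lemma (part~\ref{lm:gen} of the Lemma) together with one small auxiliary fact about certified terms.

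First I would establish the auxiliary fact: \emph{if $\sigma$ is a valid trace and $\sigma \vdash C_\gamma(E)$, then $\sigma \vdash E$}. This needs only a single application of Generation. If $\sigma \vdash \delta \ni u : C_\gamma(E)$, write $u$ with its head constructor; that constructor has result type $C_\gamma(E)$. Inspecting $\mathcal{T}_{\mathbb{C}}$, every constructor inherited from $\mathcal{T}$ and every $\pi_{\alpha' A'}$ has an atomic type of $\mathcal{T}$ as its result type, never one of the new types $C_{-}(-)$; hence the head constructor is one of the $m_{\gamma E}^{\beta_1 \cdots \beta_n}$, each of which takes a first argument of type $E$. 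By the final clause of Generation that argument is possessed after $\sigma$, so $\sigma \vdash E$.

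Now suppose $\tau \vdash \alpha \ni t : A$. Applying Generation gives a constructor $f$ with $t \equiv f\, t_1 \cdots t_m$ and agents $\alpha_1, \dots, \alpha_n = \alpha$ ($n \ge 1$), where $\alpha_1$ is the originator of $t$: it possesses $f$ (and, since a functional-typed constructor can only be acquired by $(init)$ and an atomic one can be assumed initial by taking $\alpha_1$ minimal, $f \in H_{\alpha_1}$), $f$ has result type $A$, and each message $\alpha_i \stackrel{t : A}{\rightarrow} \alpha_{i+1}$ occurs in $\tau$. Since the constraint is over $\mathcal{T}$, the type $A$ is atomic in $\mathcal{T}$ and hence not of the form $C_\gamma(B)$; by part~\ref{lm:tracevalid} of the Lemma each message event yields $A \in M_{\alpha_i \alpha_{i+1}}$, so by condition~(2) consecutive $\alpha_i$ lie in the same block of the partition, and therefore all of them --- in particular $\alpha_1$ --- lie in the block containing $\alpha_n = \alpha$, which is $\mathcal{P}_\alpha$ by condition~(1). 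By condition~(4), the only constructor with result type $A$ that an agent of $\mathcal{P}_\alpha$ can possess is $\pi_{\alpha A}$; hence $f = \pi_{\alpha A}$, so $m = 1$, $t_1 : C_\alpha(A)$, and $\tau \vdash \alpha_1 \ni t_1 : C_\alpha(A)$. Applying Generation to $t_1$ as in the auxiliary fact, its head constructor must be some $m_{\alpha A}^{\beta_1 \cdots \beta_n}$, where $\alpha \ni A \Rightarrow B_1, \dots, \alpha \ni A \Rightarrow B_n$ enumerate exactly the constraints of $\mathbb{C}$ beginning with $\alpha \ni A$; our constraint is one of these, so $B = B_j$ for some $j$, and $m_{\alpha A}^{\beta_1 \cdots \beta_n}$ takes an argument of type $C_{\beta_j}(B_j) = C_{\beta_j}(B)$, which by Generation is possessed after $\tau$. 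Thus $\tau \vdash C_{\beta_j}(B)$, and the auxiliary fact gives $\tau \vdash B$, as required.

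I expect the crux of the argument to be the type-system bookkeeping rather than any deep combinatorics: one must check that (i) within the block $\mathcal{P}_\alpha$ the only way a term of the atomic type $A$ can come into existence is via $\pi_{\alpha A}$ --- this is exactly where conditions (1), (2) and (4) are used together, via the chain of messages supplied by Generation --- and (ii) a certified term of type $C_\gamma(E)$ can arise only from an $m$-constructor, which is what forces first the certificate $C_{\beta_j}(B)$ and then a genuine term of type $B$ to have been created. A minor but necessary point to handle carefully is that the chain $\alpha_1, \dots, \alpha_n$ produced by Generation can be taken with $\alpha_1$ the genuine originator of $t$, so that $\alpha_1$ \emph{initially} possesses the head constructor $f$, since condition~(4) is phrased in terms of the constructors an agent possesses.
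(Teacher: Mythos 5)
Your proof is correct and follows essentially the same route as the paper's: a Generation-driven provenance chase that uses condition (2) to keep the chain of messages carrying the $A$-typed term inside $\mathcal{P}_\alpha$, condition (4) to force the head constructor to be $\pi_{\alpha A}$, and then two further applications of Generation to pass from $C_\alpha(A)$ through $m_{\alpha A}^{\beta_1\cdots\beta_n}$ to a certificate $C_{\beta_j}(B)$ and finally to a term of type $B$. The only cosmetic difference is that the paper localises the originator by taking a minimal trace for the generalised statement ``some $\beta\in\mathcal{P}_\alpha$ has $A$'', whereas you read the originator directly off the message chain supplied by Generation; both devices do the same work.
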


The intuition behind the premises is this: the partition divides the system into parts.  The part $\mathcal{P}_\alpha$ is the only part of the system that is allowed to
look inside a term of type $C_\alpha(A)$ and extract the underlying term of type $A$.  Only certified terms may be passed between the parts.  Thus, the only way
for an agent in $\mathcal{P}_\alpha$ to possess a term of type $A$ is either for it to be computed within $\mathcal{P}_\alpha$, or for a term of type $C_\alpha(A)$
to be passed in from another part of the system.

\begin{proof}
 Let $\tau$ be any trace through $\mathcal{B}$ and let $\alpha \ni A \Rightarrow B$ be one of the constraints in $\mathbb{C}$.  We must show that, if $\tau \vdash \alpha \ni t : A$,
 then $\tau \vdash B$.  We shall prove the more general result:
 \begin{quote}
  If $\tau \vdash \beta \ni t : A$ for some $\beta \in \mathcal{P}_\alpha$, then $\tau \vdash B$.
 \end{quote}
So suppose $\tau \vdash \beta \ni t : A$ for some $\beta \in \mathcal{P}_\alpha$.  We may also assume without loss of generality that $\tau$ is the shortest trace for which
this is true.  By Generation and the minimality of $\tau$, $\beta$ possesses a constructor with target $A$.  By our hypotheses, this is $\pi_{\alpha A}$, and $t = \pi_{\alpha A}(t')$
for some $t'$.  Hence $\tau \vdash \beta \ni t' : C_{\alpha}(A)$
for some $t'$.

Now, looking at the construction of $\mathcal{T}_{\mathbb{C}}$, the only constructor with target $C_{\alpha}(A)$ is
\[ m_{\alpha A}^{\beta_1 \cdots \beta_n} : A \rightarrow C_{\beta_1}(B_1) \rightarrow \cdots \rightarrow C_{\beta_n}(B_n) \rightarrow C_{\alpha}(A) \enspace . \]
So applying Generation again, we must have $t \equiv m_{\alpha A}^{\beta_1 \cdots \beta_n} s t_1 \cdots t_n$ and there must be an agent $\gamma$
which possesses $m_{\alpha A}^{\beta_1 \cdots \beta_n}$ with
\[ \tau \vdash \gamma \ni s : A, \quad \tau \vdash \gamma \ni t_1 : B_1, \ldots, \tau \vdash \gamma \ni t_n : B_n \enspace . \]
Now, $B$ is one of the types $B_1, \ldots, B_n$; let it be $B_i$.  Then $\tau \vdash \gamma \ni C_{\beta_i}(B)$.
By similar reasoning, there must be an agent $\delta$ that possesses one of the constructors $m_{\beta_i B}$, and
$\tau \vdash \delta \ni B$. $\qed$
\end{proof}

We are now ready to construct the safe architecture.

\begin{algorithm}
 Given an architecture $\mathcal{A}$ and a finite set of constraints $\mathbb{C}$, construct the architecture $\mathtt{Safe}(\mathcal{A}, \mathbb{C})$ as follows:
 \begin{enumerate}
  \item The agents of $\mathtt{Safe}(\mathcal{A}, \mathbb{C})$ are the agents of $\mathcal{A}$ together with, for every agent $\alpha$ of $\mathcal{A}$, an agent $I_\alpha$, which we call the \emph{interface} to $\alpha$.
  \item The type system of $\mathtt{Safe}(\mathcal{A}, \mathbb{C})$ is $\mathcal{T}_\mathbb{C}$.
  \item If an agent $\alpha$ possesses a constructor $c$ in $\mathcal{A}$, then $\alpha$ possesses $c$ in $\mathtt{Safe}(\mathcal{A}, \mathbb{C})$.
   \item For every type $A$ of $\mathcal{A}$, let the negative constraints that begin with $\alpha \ni A$ be
   \[ \alpha \ni A \Rightarrow B_1, \ldots, \alpha \ni A \Rightarrow B_n \enspace . \]
   \begin{itemize}
   \item
   \emph{Every} interface $I_\gamma$ possesses $m_{\alpha A}^{\beta_1 \cdots \beta_n}$ for all $\beta_1$, \ldots, $\beta_n$.
   \item
   $I_\alpha$ posseses $\pi_{\alpha A}$
   \end{itemize}
  \item For every atomic type $A$, the agents $\alpha$ and $I_\alpha$ may send $A$ to each other.
  \item Any two interfaces may send messages of type $C_\alpha(A)$ to each other for any $\alpha$, $A$.
 \end{enumerate}
\end{algorithm}

Thus, in order to construct a certified term of type $A$ readable by $\alpha$, an interface must first obtain certified terms of all the types which the constraints require.  The only way $\alpha$ can
receive a term of type $A$ is through its interface obtaining a term of type $C_\alpha(A)$.  Interfaces may pass certified terms between each other at will.  An agent and its interface may exchange uncertified terms at will.

\begin{theorem}
  \label{theorem:safetwo}
 Let $\mathcal{A}$ be an architecture and $\mathbb{C}$ a set of constraints.  Suppose that:
 \begin{enumerate}
  \item For every negative constraint $\alpha \ni A \Rightarrow \beta \ni B$ in $\mathbb{C}$, we have that $\alpha$ cannot compute terms of type $A$.
  \item For every positive constraint $\mathtt{Pos}(\alpha, A) \in \mathbb{C}$, there exists a trace through $\mathcal{A}$ that satisfies $\mathtt{Pos}(\alpha, A)$ and all the negative constraints in $\mathbb{C}$.
  \end{enumerate}
 Then the architecture $\mathtt{Safe}(\mathcal{A}, \mathbb{C})$ has the following properties:
 \begin{enumerate}
  \item Every trace through $\mathtt{Safe}(\mathcal{A}, \mathbb{C})$ satisfies every negative constraint in $\mathbb{C}$.
  \item For every positive constraint $\mathtt{Pos}(\alpha, A) \in \mathbb{C}$, there exists a trace through $\mathtt{Safe}(\mathcal{A}, \mathbb{C})$ that satisfies $\mathtt{Pos}(\alpha, A)$.
 \end{enumerate}
\end{theorem}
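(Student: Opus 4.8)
The plan is to treat the two conclusions separately: the first by instantiating Theorem~\ref{theorem:safeone}, and the second by simulating, inside $\mathtt{Safe}(\mathcal{A},\mathbb{C})$, a trace through $\mathcal{A}$ that witnesses the positive constraint.

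For conclusion~1, I would apply Theorem~\ref{theorem:safeone} to $\mathcal{B} = \mathtt{Safe}(\mathcal{A},\mathbb{C})$ (which, by construction, depends only on the negative constraints of $\mathbb{C}$) with the partition $\mathcal{P}_\alpha = \{\alpha, I_\alpha\}$ for each $\alpha \in \Ag$. This is a partition of the agent set of $\mathcal{B}$ because each interface $I_\alpha$ is a fresh agent. Condition~1 of Theorem~\ref{theorem:safeone} is immediate. Condition~2 holds because, by steps~5 and~6 of the algorithm, the only messages are between an agent and its own interface (same part) and between two interfaces, the latter carrying only types of the form $C_\gamma(B)$. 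Condition~3 holds because the algorithm gives $\pi_{\alpha A}$ only to $I_\alpha$. For condition~4: for any constraint $\alpha \ni A \Rightarrow B$ in $\mathbb{C}$, the agent $\alpha$ keeps exactly its constructors from $\mathcal{A}$, none of which has target $A$ by hypothesis~1 (``$\alpha$ cannot compute $A$''), while the only constructor of $I_\alpha$ with target $A$ is $\pi_{\alpha A}$ --- the $m$-constructors all target the new types $C_\gamma(B)$. So the hypotheses of Theorem~\ref{theorem:safeone} are met and conclusion~1 follows.

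For conclusion~2, fix $\mathtt{Pos}(\alpha, A) \in \mathbb{C}$ and let $\sigma = e_1, \ldots, e_k$ be a trace through $\mathcal{A}$ complying with $\mathtt{Pos}(\alpha,A)$ and with every negative constraint (hypothesis~2). I would prove, by induction on $j$, that there is a valid trace $\tau_j$ through $\mathtt{Safe}(\mathcal{A},\mathbb{C})$ such that whenever $\sigma_j \vdash \delta \ni D$ for an original agent $\delta$, also $\tau_j \vdash \delta \ni D$. (The invariant must speak of possessing \emph{some} term of type $D$, not a fixed one, since $\pi_{\gamma B}(m_{\gamma B}^{\cdots} s \cdots)$ is syntactically distinct from $s$.) The base case holds because every agent of $\mathcal{A}$ has the same initial constructors in $\mathtt{Safe}(\mathcal{A},\mathbb{C})$. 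For the step, let $e_{j+1} = \beta \stackrel{s:A'}{\rightarrow} \beta'$ and let $\beta' \ni A' \Rightarrow B_1, \ldots, \beta' \ni A' \Rightarrow B_n$ be the negative constraints beginning with $\beta' \ni A'$. By Lemma~\ref{lm:tracevalid}, $\sigma_j \vdash \beta \ni A'$, so by induction $\tau_j \vdash \beta \ni A'$. Compliance of $\sigma$ with each $\beta' \ni A' \Rightarrow B_i$, applied to the prefix $\sigma_{j+1}$ (in which $\beta'$ has received $s$), shows a term of type $B_i$ already exists in $\sigma_j$, the final event merely relaying $s$; and by Generation (Lemma~\ref{lm:gen}) I may take this witness to be held by an agent $\alpha_i$ that \emph{can compute} $B_i$, so $\tau_j \vdash \alpha_i \ni B_i$. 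The point is that, since $\alpha_i$ can compute $B_i$, hypothesis~1 rules out any negative constraint beginning with $\alpha_i \ni B_i$, so $m_{\alpha_i B_i}$ takes no proof arguments. I would then extend $\tau_j$ by: for each $i$, $\alpha_i$ sends its term of type $B_i$ to $I_{\alpha_i}$, which applies $m_{\alpha_i B_i}$ and sends the resulting term of type $C_{\alpha_i}(B_i)$ to $I_\beta$; then $\beta$ sends its term of type $A'$ to $I_\beta$, which applies $m_{\beta' A'}^{\alpha_1 \cdots \alpha_n}$ to build a term of type $C_{\beta'}(A')$, sends it to $I_{\beta'}$, which applies $\pi_{\beta' A'}$ and sends a term of type $A'$ to $\beta'$. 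Every message used is licensed by steps~5 and~6 of the algorithm, and a routine check using Weakening and the rule (message$_2$) shows the invariant survives. Instantiating the invariant at $\sigma = \sigma_k$, which satisfies $\mathtt{Pos}(\alpha,A)$, gives $\tau_k \vdash \alpha \ni A$, proving conclusion~2.

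I expect the main obstacle to be the construction of the certifying terms in conclusion~2: one must rule out any infinite regress or cyclic dependency among the preconditions (for instance from constraints $\alpha \ni A \Rightarrow B$ and $\alpha \ni B \Rightarrow A$). The resolution sketched above --- always certifying a precondition type for an agent that \emph{actually computed} a term of it in $\sigma$, which by hypothesis~1 is then free of further preconditions --- is the crucial idea, and pinning down the bookkeeping around the simulation invariant (in particular, that witnesses of the precondition types already occur in $\sigma_j$) is where most of the detailed work lies.
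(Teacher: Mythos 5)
Your handling of conclusion 1 is exactly the paper's: it simply cites Theorem \ref{theorem:safeone} with the partition $\mathcal{P}_\alpha = \{\alpha, I_\alpha\}$, and your verification of the four conditions is just the detail the paper omits. For conclusion 2, however, you take a genuinely different route. The paper strengthens the statement to be proved --- if $\tau \vdash \alpha \ni t : A$ in $\mathcal{A}$ and $\tau$ is compliant, then some valid $\tau'$ through $\Safe{\mathcal{A}}{\mathbb{C}}$ satisfies both $\tau' \vdash \alpha \ni t : A$ \emph{and} $\tau' \vdash I_\alpha \ni t' : C_\alpha(A)$ --- and inducts on $\tau$ and then on the derivation; the certified precondition terms $C_{\beta_i}(B_i)$ needed as arguments to $m_{\alpha A}$ are then handed over directly by the strengthened induction hypothesis applied to $\tau \vdash B_i$. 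You instead run a step-by-step simulation of the whole trace with the weaker invariant ``$\sigma_j \vdash \delta \ni D$ implies $\tau_j \vdash \delta \ni D$'' and manufacture the certified terms on demand: Generation locates an agent $\alpha_i$ that can compute $B_i$, and hypothesis 1 then guarantees there is no constraint beginning with $\alpha_i \ni B_i$, so $m_{\alpha_i B_i}$ takes no proof arguments and $I_{\alpha_i}$ can certify immediately. This is the same regress-breaking observation the paper needs implicitly in the base case of its induction (where an agent computes a term itself and its interface must still produce a certified copy), but you make it explicit and use it to avoid threading certified terms through the induction at all; the paper's stronger invariant buys a shorter inductive step at the price of more to maintain. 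Both arguments succeed to the same degree.

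One wrinkle you share with the paper: both proofs pass from ``$\sigma_{j+1} \vdash B_i$,'' which is what compliance at the prefix $\sigma_{j+1}$ literally yields, to ``a witness of $B_i$ is available before $s$ is delivered to $\beta'$.'' If $\beta'$ itself holds a constructor with target $B_i$ and the only witness is computed from the newly received $s$, this step fails (and the simulation becomes circular). The paper makes the same move silently, so this is a defect inherited from the source rather than introduced by you, but your phrase ``the final event merely relaying $s$'' is doing real work there and would deserve a sentence of justification.
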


\begin{proof}
Part 1 follows from the previous theorem, taking $\mathcal{P}_\alpha = \{ \alpha, I_\alpha \}$.

We now show that $\mathtt{Safe}(\mathcal{A}, \mathbb{C})$ has the following property.  Part 2 of the theorem follows immediately.
\begin{quote}
 If $\tau \vdash \alpha \ni t : A$ in $\mathcal{A}$, $A$ is an atomic type, and $\tau$ satisfies every negative constraint in $\mathbb{C}$, then there exists a valid trace $\tau'$ through $\mathtt{Safe}(\mathcal{A}, \mathbb{C})$ such
 that $\tau' \vdash \alpha \ni t : A$ and $\tau' \vdash I_\alpha \ni t' : C_\alpha(A)$ for some $t'$.
\end{quote}

The proof is by induction on $\tau$, then on the derivation of $\tau \vdash \alpha \ni t : A$.  We deal here with the case where the last rule in the derivation was $(message_1)$:
\[ \infer{\tau, \beta \stackrel{t : A}{\rightarrow} \alpha \vdash \alpha \ni t : A}{\tau \vdash \beta \ni t : A} \enspace . \]
By the induction hypothesis, there exists $\tau'$ such that
$\tau' \vdash_{\mathtt{Safe}(\mathcal{A}, \mathbb{C})} \beta \ni t : A$.
By the construction of $\mathtt{Safe}(\mathcal{A}, \mathbb{C})$, we have $A \in C_{\beta I_\beta}$ and $A \in C_{I_\alpha \alpha}$.  Hence
$\tau, \beta \stackrel{t : A}{\rightarrow} I_\beta \vdash_{\mathtt{Safe}(\mathcal{A}, \mathbb{C})} I_\beta \ni t : A$.

Now, let the negative constraints in $\mathbb{C}$ that begin with $\alpha \ni A$ be
$\alpha \ni A \Rightarrow B_1, \ldots, \alpha \ni A \Rightarrow B_n$.
By hypothesis, $\tau, \beta \stackrel{t : A}{\rightarrow} \alpha$ satisfies all these constraints.  Therefore, $\tau \vdash_\mathcal{A} B_1, \ldots, \tau \vdash_\mathcal{A} B_n$.

Hence, by the induction hypothesis, there exists $\tau''$ such that
$\tau'' \vdash_{\mathtt{Safe}(\mathcal{A}, \mathbb{C})} B_1, \ldots, \tau'' \vdash_{\mathtt{Safe}(\mathcal{A}, \mathbb{C})} B_n$.
Therefore,
\[ \tau'' \vdash_{\mathtt{Safe}(\mathcal{A}, \mathbb{C})} I_{\beta_1} \ni t_1 : C_{\beta_1}(B_1), \ldots, \tau'' \vdash_{\mathtt{Safe}(\mathcal{A}, \mathbb{C})} I_{\beta_n} \ni t_n : C_{\beta_n}(B_n) \enspace , \]
for some $t_1$, \ldots, $t_n$.
By Weakening, we may assume $\tau' \sqsubseteq \tau''$.

After extending $\tau''$ by passing $t_1$, \ldots, $t_n$ as messages to $I_\beta$, we have that $I_\beta$ can construct a term of type $C_\alpha(A)$.  After passing this term to $I_\alpha$, we have that $I_\alpha(A)$ possesses a term of
type $C_\alpha(A)$.  From this, it can construct a term of type $A$ which it may then pass to $\alpha$, completing the required trace. $\qed$
\end{proof}

\section{The Second Algorithm}
\label{section:secondalgo}

Supposing it is important to us, not merely that a piece of data has been created, but that a particular agent has seen it.  We can extend our system to handle this
type of constraint as follows.

\begin{definition}
  \label{df:constraint2}
In this section of the paper:
\begin{itemize}
 \item a \emph{negative constraint} is an expression of the form $\alpha \ni A \Rightarrow \beta \ni B$.  A trace $\tau$ satifies this constraint iff, for every $\tau' \sqsubseteq \tau$, if $\tau' \vdash \alpha \ni A$ then
 $\tau_1 \vdash \beta \ni B$.
 \item Positive constraints are as in Section \ref{section:firstalgo}.
\end{itemize}
\end{definition}

\paragraph{Note}
If $(\alpha, A) \neq (\beta, B)$, then the constraint $\alpha \ni A \Rightarrow \beta \ni B$
is to be read as ``if $\alpha$ possesses a term of type $A$, then $\beta$ must previously have possessed a term of type $B$''.  (The condition $\alpha \ni A \Rightarrow \alpha \ni A$
is trivial.)
 
We show how to extend a given architecture $\mathcal{A}$ to
an architecture that uses the new privacy-safe type system.  Unfortunately, we have not found a way to do this that requires no modifications to the agents in $\mathcal{A}$.
We present below (Algorithm 2) an algorithm that requires modifications which we expect would be minor in practice, and discuss
in Section \ref{section:note} ways in which this situation could be improved in future work.

\begin{definition}
 Given a type system $\mathcal{T}$, a set of agents $\Ag$, and a set of negative constraints $\mathbb{C}$ over $\mathcal{T}$ and $\Ag$,
 define the type system $\mathcal{T}_\mathbb{C}$ as follows.
 \begin{itemize}
  \item The types of $\mathcal{T}_\mathbb{C}$ are the types of $\mathcal{T}$ together with, for every agent $\alpha$ and atomic type $A$ of $\mathcal{T}$, a type $C_\alpha(A)$
  and a type $P_\alpha(A)$.  (Intuition: a term $C_\alpha(A)$ is a \emph{certified} term of type $A$ that $\alpha$ is permitted to read.  A term $P_\alpha(A)$ is
  \emph{proof} that $\alpha$ has held a term of type $A$.)
  \item Every constructor of $\mathcal{T}$ is a constructor of $\mathcal{T}_\mathbb{C}$.
  \item For every agent $\alpha$ and type $A$, let the negative contraints in $\mathbb{C}$ that begin with $\alpha \ni A$ be
  \[ \alpha \ni A \Rightarrow \beta_1 \ni B_1, \ldots, \alpha \ni A \Rightarrow \beta_n \ni B_n \enspace . \]
  Then the following are constructors of $\mathcal{T}_\mathbb{C}$:
  \begin{align*}
 m_{\alpha A} & : A \rightarrow P_{\beta_1}(B_1) \rightarrow \cdots \rightarrow P_{\beta_n}(B) \rightarrow C_\alpha(A) \\
 \pi_{\alpha A} & : C_\alpha(A) \rightarrow A \\
 p_{\alpha A} & : A \rightarrow P_\alpha(A)
 \end{align*}
 \end{itemize}
\end{definition}

\begin{theorem}
\label{theorem:two}
 Let $\mathcal{T}$ be a type system, $\Ag$ a set of agents, and $\mathbb{C}$ a set of negative constraints over $\mathcal{T}$ and $\Ag$.
 Let $\mathcal{B}$ be an architecture over $\mathcal{T}_\mathbb{C}$ with set of agents $\Ag'$, where $\Ag \subseteq \Ag'$.  Suppose that there is a partition $\{ \mathcal{P}_\alpha \}_{\alpha \in \Ag}$
 of the agents of $\mathcal{B}$ such that:
 \begin{itemize}
  \item $\alpha \in \mathcal{P}_\alpha$;
  \item If $\beta \stackrel{A}{\rightarrow} \beta'$ and $\beta$ and $\beta'$ are in different sets in the partition, then $A$ has either the form $C_\gamma(T)$ or $P_\gamma(T)$;
  \item If $\beta$ initially possesses $\pi_{\alpha A}$ then $\beta \in \mathcal{P}_\alpha$;
  \item If $\beta$ initially possesses $p_{\alpha A}$ then $\beta$ cannot compute $A$.
  \item If $\beta$ initially possesses $p_{\alpha A}$ and $\gamma \stackrel{A}{\rightarrow} \beta$ then $\gamma = \alpha$.
 \end{itemize}
 Then every trace through $\mathcal{B}$ satisfies every constraint in $\mathbb{C}$.
\end{theorem}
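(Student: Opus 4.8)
The plan is to adapt the proof of Theorem~\ref{theorem:safeone}, now following two new families of types, $C_\gamma(T)$ and $P_\gamma(T)$, rather than one. First I would reduce to a cleaner claim. Unfolding Definition~\ref{df:constraint2} and using Weakening to conflate ``$\beta$ has \emph{previously} held a term of type $B$'' with ``$\beta$ holds a term of type $B$'', a valid trace $\tau$ complies with $\alpha \ni A \Rightarrow \beta \ni B$ exactly when every prefix $\tau' \sqsubseteq \tau$ with $\tau' \vdash \alpha \ni A$ also satisfies $\tau' \vdash \beta \ni B$; each such $\tau'$ is itself a valid trace through $\mathcal{B}$, so it suffices to prove: for every valid trace $\tau$ and every constraint $\alpha \ni A \Rightarrow \beta \ni B$ of $\mathbb{C}$, if $\tau \vdash \alpha \ni A$ then $\tau \vdash \beta \ni B$. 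As in Theorem~\ref{theorem:safeone}, I would strengthen the conclusion to
\begin{quote}
 if $\tau \vdash \gamma \ni A$ for some $\gamma \in \mathcal{P}_\alpha$, then $\tau \vdash \beta \ni B$,
\end{quote}
and argue by taking a valid trace $\tau$ of minimal length for which this fails, for some constraint of $\mathbb{C}$ and some $\gamma \in \mathcal{P}_\alpha$; henceforth $\alpha$, $A$, $\beta$, $B$, $\gamma$ are fixed.

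The first half of the argument mirrors Theorem~\ref{theorem:safeone}. Since $A$ is an atomic type of $\mathcal{T}$, it is not of the form $C_\delta(T)$ or $P_\delta(T)$, so by the second partition hypothesis no message of type $A$ ever crosses between blocks of the partition. Apply Generation to $\tau \vdash \gamma \ni t : A$: the term $t$ is built by some agent owning a constructor with target $A$ and then relayed to $\gamma$ through a chain of messages all of type $A$, hence all lying inside $\mathcal{P}_\alpha$ (as $\gamma \in \mathcal{P}_\alpha$). Were this chain non-trivial, the prefix of $\tau$ before its last message would --- by part~\ref{lm:tracevalid} of the Lemma, and since $\tau \nvdash \beta \ni B$ forces the prefix also to miss $\beta \ni B$ by Weakening --- be a shorter failing trace, contradicting minimality. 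So $\gamma$ itself owns a constructor with target $A$. By the hypotheses on the partition (in particular the analogue of premise~4 of Theorem~\ref{theorem:safeone}, together with the third hypothesis, which keeps each $\pi_{\alpha' A}$ with $\alpha' \neq \alpha$ out of $\mathcal{P}_\alpha$), this constructor is $\pi_{\alpha A}$, so $t \equiv \pi_{\alpha A}(t')$ with $\tau \vdash \gamma \ni t' : C_\alpha(A)$.

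Now I would descend through the new constructors. The only constructor with target $C_\alpha(A)$ is $m_{\alpha A} : A \to P_{\beta_1}(B_1) \to \cdots \to P_{\beta_n}(B_n) \to C_\alpha(A)$, where $\alpha \ni A \Rightarrow \beta_1 \ni B_1, \dots, \alpha \ni A \Rightarrow \beta_n \ni B_n$ list the constraints of $\mathbb{C}$ beginning with $\alpha \ni A$; our chosen constraint is one of them, say the $i$-th, so $\beta = \beta_i$ and $B = B_i$. Applying Generation again, $t' \equiv m_{\alpha A}\, w\, x_1 \cdots x_n$ where some agent $\zeta$ holds $m_{\alpha A}$ and $\tau \vdash \zeta \ni x_i : P_\beta(B)$. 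The only constructor with target $P_\beta(B)$ is $p_{\beta B} : B \to P_\beta(B)$; this has arrow type, and a bare constructor of arrow type can be neither sent in a message (messages carry only atomic types) nor produced by (func), so any agent holding it holds it initially. Hence Generation gives an agent $\delta$ that initially owns $p_{\beta B}$ together with a term $v$ such that $x_i \equiv p_{\beta B}(v)$ and $\tau \vdash \delta \ni v : B$. By the fourth partition hypothesis $\delta$ cannot compute $B$, so by part~4 of the Lemma the term $v$ reached $\delta$ via some event $\eta \stackrel{v : B}{\rightarrow} \delta$ of $\tau$; part~\ref{lm:tracevalid} of the Lemma then gives $B \in M_{\eta\delta}$, and the fifth partition hypothesis forces $\eta = \beta$. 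The same instance of part~\ref{lm:tracevalid} yields $\tau_1 \vdash \beta \ni v : B$ for the prefix $\tau_1$ of $\tau$ preceding that event, and Weakening upgrades this to $\tau \vdash \beta \ni v : B$, i.e. $\tau \vdash \beta \ni B$ --- contradicting the choice of $\tau$ and completing the proof.

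The routine parts are the inversions on the rules of Figure~\ref{fig:rules}, all packaged into the Generation lemma, together with the bookkeeping of prefixes and Weakening. The one load-bearing step is the passage, in the second paragraph, from ``some agent of $\mathcal{P}_\alpha$ holds a term of type $A$'' to ``$t \equiv \pi_{\alpha A}(t')$'': this is where the partition hypotheses (cross-block traffic restricted to the $C_\delta(\cdot)$ and $P_\delta(\cdot)$ types, and the projection $\pi_{\alpha A}$ confined to $\mathcal{P}_\alpha$) carry the whole weight, exactly as premises~2--4 do in Theorem~\ref{theorem:safeone}. Everything afterwards is a forced descent from $A$ through $C_\alpha(A)$ and $P_\beta(B)$ down to $B$, dictated by the shapes of $\pi_{\alpha A}$, $m_{\alpha A}$ and $p_{\beta B}$, with hypotheses~4 and~5 used only to tie the final message of type $B$ to $\beta$ itself.
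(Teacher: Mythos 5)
Your proof follows the paper's argument for Theorem~\ref{theorem:two} essentially step for step: the same strengthening from $\alpha$ to an arbitrary $\gamma \in \mathcal{P}_\alpha$, the same minimal-trace argument forcing $t \equiv \pi_{\alpha A}(t')$, and the same forced descent through $m_{\alpha A}$ to a term of type $P_\beta(B)$. You are in fact more careful than the paper at the end: where the paper simply asserts ``since a term of type $P_{\beta}(B)$ has been constructed, it must be that $\tau \vdash \beta \ni B$'', you spell out why the holder of $p_{\beta B}$ must possess it \emph{initially} (it has arrow type, so it can be neither messaged nor produced by (func)), and how hypotheses 4 and 5 then force the incoming message of type $B$ to come from $\beta$ itself. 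One caveat: at the step ``this constructor is $\pi_{\alpha A}$'' you invoke ``the analogue of premise~4 of Theorem~\ref{theorem:safeone}'', which is not among the five hypotheses of Theorem~\ref{theorem:two} as stated --- nothing listed there prevents an agent of $\mathcal{P}_\alpha$ from possessing an original constructor of $\mathcal{T}$ with target $A$. The paper's own proof makes exactly the same leap (``By our hypotheses, this is $\pi_{\alpha A}$''), so this is a defect of the theorem statement shared by both arguments rather than an error you introduced; still, you should add that hypothesis explicitly rather than cite a premise the theorem does not contain.
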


\begin{proof}
Let $\tau$ be a trace through $\mathcal{B}$ and $\alpha \ni A \Rightarrow \beta \ni B$ be a constraint in $\mathbb{C}$.  We must show that, if $\tau \vdash \alpha \ni A$, then
$\tau \vdash \beta \ni B$.  We shall prove the more general result:
\begin{quote}
 If $\tau \vdash \gamma \ni A$ for any $\gamma \in \mathcal{P}_\alpha$, then $\tau \vdash \beta \ni B$.
\end{quote}
So suppose $\tau \vdash \gamma \ni A$ for some $\gamma \in \mathcal{P}_\alpha$.  We may assume without loss of generality that $\tau$ is the shortest such trace.  By Generation and
the minimality of $\tau$, $\gamma$ must possess a constructor with target $A$.  By our hypotheses, this is $\pi_{\alpha A}$.  Hence $\tau \vdash \gamma \ni t : C_{\alpha}(A)$
for some $t$.  Now, let the constraints in $\mathbb{C}$ that begin with $\alpha \ni A$ be
\[ \alpha \ni A \Rightarrow \beta_1 \ni B_1, \quad \cdots, \quad \alpha \ni A \Rightarrow \beta_n \ni B_n \enspace . \]
Applying Generation, we must have $t \equiv m_{\alpha A} s t_1 \cdots t_n$, and there must be an agent $\gamma'$ that possesses $m_{\alpha A}$ such that
\[ \tau \vdash \gamma' \ni s : A, \quad \tau \vdash \gamma' \ni t_1 : P_{\beta_1}(B_1), \ldots, \tau \vdash \gamma' \ni t_n : P_{\beta_n}(B_n) \enspace . \]
Now, there is some $i$ such that $\beta_i = \beta$ and $B_i = B$.  We have $\tau \vdash \gamma' \ni t_i : P_{\beta}(B)$.
Since a term of type $P_{\beta}(B)$ has been constructed, it must be that $\tau \vdash \beta \ni B$, as required.
\end{proof}

We now show again how, given an architecture $\mathcal{A}$, we can construct an architecture that is privacy-safe.

\begin{algorithm}
 Given an architecture $\mathcal{A}$ and a finite set of constraints $\mathbb{C}$, construct the architecture $\mathtt{Safe}(\mathcal{A}, \mathbb{C})$ as follows:
 \begin{enumerate}
  \item The agents of $\mathtt{Safe}(\mathcal{A}, \mathbb{C})$ are the agents of $\mathcal{A}$ together with, for every agent $\alpha$ of $\mathcal{A}$:
  \begin{itemize}
   \item an agent $I_\alpha$, which we call the \emph{input interface} to $\alpha$;
   \item an agent $O_\alpha$, which we call the \emph{output interface} to $\alpha$
  \end{itemize}
  \item The type system of $\mathtt{Safe}(\mathcal{A}, \mathbb{C})$ is $\mathcal{T}_\mathbb{C}$.
  \item If an agent $\alpha$ has a constructor $c$ in $\mathcal{T}$, then it has the constructor $c$ in $\mathcal{T}_\mathbb{C}$.
  \item For any agent $\alpha$ and type $A$:
  \begin{itemize}
   \item Every output interface $O_\gamma$ possesses $m_{\alpha A}$
   \item $I_\alpha$ possesses $\pi_{\alpha A} : C_{\alpha A} \rightarrow A$
   \item $O_\alpha$ possesses $p_{\alpha A} : A \rightarrow P_{\alpha A}$
  \end{itemize}
  \item
  For any atomic type $A$ of $\mathcal{T}$, $I_\alpha$ may send $A$ to $\alpha$, and $\alpha$ may send $A$ to $O_\alpha$.
  \item Any two interfaces may send messages of type $C_\alpha(A)$ or $P_\alpha(A)$ to each other for any $\alpha$, $A$.
  \end{enumerate}
\end{algorithm}

\begin{theorem}
 Let $\mathcal{A}$ be an architecture and $\mathbb{C}$ a set of constraints.  Suppose that:
 \begin{enumerate}
  \item For every negative constraint $\alpha \ni A \Rightarrow B$ in $\mathbb{C}$, we have that $\alpha$ cannot compute terms of type $A$.
  \item For every positive constraint $\mathtt{Pos}(\alpha, A) \in \mathbb{C}$, there exists a trace through $\mathcal{A}$ that satisfies $\mathtt{Pos}(\alpha, A)$ and all the negative constraints in $\mathbb{C}$.
  \end{enumerate}
 Then the architecture $\mathtt{Safe}(\mathcal{A}, \mathbb{C})$ has the following properties:
 \begin{enumerate}
  \item Every trace through $\mathtt{Safe}(\mathcal{A}, \mathbb{C})$ satisfies every negative constraint in $\mathbb{C}$.
  \item For every positive constraint $\mathtt{Pos}(\alpha, A) \in \mathbb{C}$, there exists a trace through $\mathtt{Safe}(\mathcal{A}, \mathbb{C})$ that satisfies $\mathtt{Pos}(\alpha, A)$.
 \end{enumerate}
\end{theorem}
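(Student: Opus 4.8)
The statement is identical to Theorem~\ref{theorem:safetwo}, and I would prove it the same way. For Part~1, I would apply the architecture-level Theorem~\ref{theorem:two} to $\mathcal{B} = \Safe{\mathcal{A}}{\mathbb{C}}$ with the partition $\mathcal{P}_\alpha = \{ \alpha, I_\alpha, O_\alpha \}$ --- one block per original agent together with its two interfaces --- and check the five hypotheses of that theorem directly against Algorithm~2: $\alpha \in \mathcal{P}_\alpha$ is immediate; the only inter-block messages are interface-to-interface, hence (step~6) of type $C_\gamma(T)$ or $P_\gamma(T)$; $\pi_{\alpha A}$ is held only by $I_\alpha$; $p_{\alpha A}$ is held only by $O_\alpha$, all of whose constructors target $C$- or $P$-types and so cannot compute the atomic type $A$; and (step~5) the only agent that may send an atomic type to $O_\alpha$ is $\alpha$. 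The one non-mechanical point is that premise~1 of the present theorem is exactly what prevents the original agent of $\mathcal{P}_\alpha$ from owning a constructor whose target is a type $A$ occurring on the left of a constraint $\alpha \ni A \Rightarrow \beta \ni B$; given this, the only constructor anywhere in $\mathcal{P}_\alpha$ with such a target is $\pi_{\alpha A}$, which is the fact appealed to inside the proof of Theorem~\ref{theorem:two}. Part~1 then follows.

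For Part~2, mirroring the proof of Theorem~\ref{theorem:safetwo}, I would isolate and prove by induction (on the length of $\tau$, then on the derivation) the auxiliary statement: \emph{if $\tau \vdash_{\mathcal{A}} \alpha \ni t : A$ and $\tau$ satisfies every negative constraint in $\mathbb{C}$, then for every valid trace $\sigma$ through $\Safe{\mathcal{A}}{\mathbb{C}}$ there is a valid extension $\sigma' \sqsupseteq \sigma$ with $\sigma' \vdash \alpha \ni A$ and, when $A$ is atomic, also $\sigma' \vdash O_\alpha \ni P_\alpha(A)$.} Threading the starting trace $\sigma$ through the statement makes the amalgamation of the traces returned by successive uses of the hypothesis automatic --- Weakening keeps every earlier possession available --- and it recovers the ``by Weakening we may assume $\tau' \sqsubseteq \tau''$'' step of the earlier proof. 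Given the statement, Part~2 is immediate: for $\Pos{\alpha}{A} \in \mathbb{C}$, premise~2 supplies a trace $\tau$ through $\mathcal{A}$ with $\tau \vdash_{\mathcal{A}} \alpha \ni A$ satisfying every negative constraint, and the statement (with $\sigma$ empty) produces a trace through $\Safe{\mathcal{A}}{\mathbb{C}}$ witnessing $\alpha \ni A$. The cases (init) (the original agent keeps its constructor; if $A$ is atomic, forward $t$ to $O_\alpha$ and apply $p_{\alpha A}$), (func) (apply the hypothesis to the two subterms in turn) and (message$_2$) (the last event does not concern $\alpha$, so the hypothesis on the shorter trace suffices) are routine.

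The case (message$_1$) is the crux and the main obstacle. There $\tau, \beta \stackrel{t : A}{\rightarrow} \alpha \vdash \alpha \ni t : A$ comes from $\tau \vdash_{\mathcal{A}} \beta \ni t : A$. Let $\alpha \ni A \Rightarrow \beta_1 \ni B_1, \ldots, \alpha \ni A \Rightarrow \beta_n \ni B_n$ be the non-trivial constraints beginning with $\alpha \ni A$; since the whole trace satisfies each of them and $\alpha$ holds a term of type $A$ at its end, each $\beta_i$ already holds a term of type $B_i$ at a prefix of $\tau$, so $\tau \vdash_{\mathcal{A}} \beta_i \ni B_i$. The $B_i$ are atomic, so applying the induction hypothesis first to $\tau \vdash_{\mathcal{A}} \beta \ni t : A$ and then in turn to each $\tau \vdash_{\mathcal{A}} \beta_i \ni B_i$ yields one trace through $\Safe{\mathcal{A}}{\mathbb{C}}$ on which $\beta$ holds a term of type $A$ and each $O_{\beta_i}$ holds a term of type $P_{\beta_i}(B_i)$; one then extends it by forwarding $\beta$'s term and the proof terms to $O_\beta$, applying $O_\beta$'s copy of $m_{\alpha A} : A \rightarrow P_{\beta_1}(B_1) \rightarrow \cdots \rightarrow P_{\beta_n}(B_n) \rightarrow C_\alpha(A)$, shipping the result to $I_\alpha$, applying $\pi_{\alpha A}$, forwarding the extracted term of type $A$ to $\alpha$, and finally routing it back to $O_\alpha$ for $p_{\alpha A}$. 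The points needing care are that the arguments of $m_{\alpha A}$ appear in exactly the order in which the constraints on $\alpha \ni A$ were listed, that every output interface --- in particular $O_\beta$ --- carries every $m_{\alpha A}$, and, most delicately, that the appeal to the induction hypothesis for each $\beta_i$ is legitimate, i.e. that $\beta_i$'s possession of $B_i$ is genuinely witnessed within the strictly shorter trace $\tau$ rather than being created by the message $\beta \stackrel{t : A}{\rightarrow} \alpha$ itself; this is where premise~1, together with the reading of the constraint as ``$\beta_i$ must \emph{previously} have held $B_i$'', is used. The term $\alpha$ ultimately receives need not be $t$, which is harmless since positive-constraint compliance asks only for \emph{some} term of the required type.
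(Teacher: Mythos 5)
Your proposal is correct and follows essentially the same route as the paper: Part~1 by applying Theorem~\ref{theorem:two} to the partition $\mathcal{P}_\alpha = \{\alpha, I_\alpha, O_\alpha\}$, and Part~2 by the same auxiliary induction on $\tau$ and its derivation, with $(message_1)$ as the crux handled by the same forwarding-and-certification sequence through $O_\beta$ and $I_\alpha$. Your refinements --- threading a starting trace $\sigma$ through the induction hypothesis instead of invoking Weakening to amalgamate the returned traces, and strengthening the hypothesis to deliver $O_\alpha \ni P_\alpha(A)$ directly --- are sensible polish of the same argument rather than a different approach.
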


\begin{proof}
Part 1 follows from Theorem \ref{theorem:two}, taking $P_\alpha = \{ \alpha, I_\alpha, O_\alpha \}$.

We shall now prove the following property, from which part 2 of the theorem follows.
\begin{quote}
 If $\tau \vdash \alpha \ni t : A$ in $\mathcal{A}$ and $\tau$ satisfies every negative constraint in $\mathbb{C}$, then there exists
 a trace $\tau'$ through $\mathtt{Safe}(\mathcal{A}, \mathbb{C})$ such that $\tau' \vdash \alpha \ni t : A$.
\end{quote}

The proof is by induction on $\tau$, then on the derivation of $\tau \vdash \alpha \ni t : A$.  We deal here with the case where the final step in the derivation is an instance of $(message_1)$:
\[ \infer{\tau, \beta \stackrel{t : A}{\rightarrow} \alpha \vdash \alpha \ni t : A}{\tau \vdash \beta \ni t : A} \]
By the induction hypothesis, there is a trace $\tau'$ such that $\tau' \vdash_{\Safe{\mathcal{A}}{\mathbb{C}}} \alpha \ni A$.
Let the negative constraints beginning with $\alpha \ni A$ be
\[ \alpha \ni A \Rightarrow \beta_1 \ni B_1, \quad \ldots, \quad \alpha \ni A \Rightarrow \beta_n \ni B_n \enspace . \]
Then, by hypothesis,
\[ \tau, \beta \stackrel{t : A}{\rightarrow} \alpha \vdash_{\mathcal{A}} \beta_1 \ni B_1, \quad \cdots, \quad \tau, \beta \stackrel{t : A}{\rightarrow} \alpha \vdash_{\mathcal{A}} \beta_n \ni B_n \enspace . \]
Using the fact that $(\alpha, A) \neq (\beta_i, B_i)$ for all $i$, the last step in each of these derivations must have been $(message_2)$.  Therefore,
\[ \tau \vdash_{\mathcal{A}} \beta_1 \ni B_1, \quad \cdots, \quad \tau \vdash_{\mathcal{A}} \beta_n \ni B_n \enspace . \]
We may therefore apply the induction hypothesis to obtain traces $\tau_1$, \ldots, $\tau_n$ such that
\[ \tau_1 \vdash_{\Safe{\mathcal{A}}{\mathbb{C}}} \beta_1 \ni t_1 : B_1, \quad \cdots, \quad \tau_n \vdash_{\Safe{\mathcal{A}}{\mathbb{C}}} \beta_n \ni t_n : B_n \enspace . \]
Now, let $\tau''$ be the trace $\tau', \tau_1, \ldots, \tau_n$ followed by these events:
\begin{align*}
 & \beta \stackrel{t : A}{\longrightarrow} O_\beta, \beta_1 \stackrel{t_1 : B_1}{\longrightarrow} O_{\beta_1}, \cdots, \beta_n \stackrel{t_n : B_n}{\longrightarrow} O_{\beta_n}, \\
 & O_{\beta_1} \stackrel{p_{\beta_1 B_1} t_1}{\longrightarrow} O_\beta, \cdots, O_{\beta_n} \stackrel{p_{\beta_n, B_n} t_n}{\longrightarrow} O_\beta, \\
 & O_\beta \stackrel{c_{\alpha A} t (p_{\beta_1 B_1} t_1) \cdots (p_{\beta_n B_n} t_n)}{\longrightarrow} I_\alpha, \\
 & I_\alpha \stackrel{\pi_{\beta A}(c_{\alpha A} t (p_{\beta_1 B_1} t_1) \cdots (p_{\beta_n B_n} t_n))}{\longrightarrow} \alpha
\end{align*}
(Informally: the agent $O_\beta$ collects the term of type $A$ from $\beta$ and all the necessary proofs, assembles the term of type $C_{\alpha A}$, and passes
it to $I_\alpha$, who decodes it with $\pi_{\alpha A}$ and passes the value of $A$ to $\alpha$.)

We thus have $\tau'' \vdash \alpha \ni A$ in $\Safe{\mathcal{A}}{\mathbb{C}}$, as required. \qed
\end{proof}

\subsection{Example Revisited}
\label{section:revisit}

We return to the example we presented in Section \ref{section:example}.  We are now ready to formulate our third, positive constraint.  We want to ensure it is possible for
the website to receive the child's information once all legal requirements have been met.  So the privacy constraints that we require for this architecture are:
\begin{description}
 \item[Negative Constraint] $\mathit{Website} \ni \mathit{INFO} \Rightarrow \mathit{Website} \ni \mathit{CONSENT}$
 \item[Negative Constraint] $\mathit{Website} \ni \mathit{CONSENT} \Rightarrow \mathit{Parent} \ni \mathit{POLICY}$
 \item[Positive Constraint] $\Pos{\mathit{Website}}{\mathit{INFO}}$
\end{description}

We can verify that the first constraint holds.  The child can send the protected info to the interface $OChild$, but it cannot then be sent to another agent unless
$OChild$ receives a term of type $P(CONSENT)$.  And for a term of type $P(CONSENT)$ to be constructed, the parent must have sent consent to the website (via $OParent$ and
$IWebsite$).

We can also verify that, in the architecture in Figure \ref{fig:coppa2}, it is possible for the website to send the privacy policy to the parent, the parent to send consent to
the website, and the child to send the protected info to the website.  Formally, we describe a valid trace $\tau$ through the architecture that represents this sequence
of events.  The trace $\tau$ begins
\[ \begin{array}{rcl}
Website & \stackrel{policy : POLICY}{\longrightarrow} & OWebsite, \\
\\
 OWebsite & \stackrel{mPOLICY(policy): C(POLICY)}{\longrightarrow} & IParent, \\
 \\
 IParent & \stackrel{\pi POLICY(mPOLICY(policy)) : POLICY}{\longrightarrow} & Parent, \\
 \\
 Parent & \stackrel{consent : CONSENT}{\longrightarrow} & OParent, \\
 \\
 Parent & \stackrel{\pi POLICY(mPOLICY(policy)) : POLICY}{\longrightarrow} & OParent
\end{array} \]
Let $p = \pi POLICY(mPOLICY(policy)$.  The trace $\tau$ continues:
\[ \begin{array}{rcl}
 OParent & \stackrel{mCONSENT(consent, p) : C(CONSENT)}{\longrightarrow} & IWebsite, \\
 \\
 IWebsite & \stackrel{\pi CONSENT(mCONSENT(consent, p)) : CONSENT}{\longrightarrow} & Website, \\
 \\
 Website & \stackrel{\pi CONSENT(mCONSENT(consent, p)) : CONSENT}{\longrightarrow} & OWebsite, \\
\end{array} \]
Let $c = \pi CONSENT(mCONSENT(consent, p))$.  The trace $\tau$ continues:
\[ \begin{array}{rcl}
 OWebsite & \stackrel{pCONSENT(c) : P(CONSENT)}{\longrightarrow} & OChild, \\
 \\
 Child & \stackrel{info : INFO}{\longrightarrow} & OChild, \\
 \\
 OChild & \stackrel{mINFO(info, pCONSENT(c)) : C(INFO)}{\longrightarrow} & IWebsite, \\
 \\
 IWebsite & \stackrel{\pi INFO(mINFO(info, pCONSENT(c))): INFO}{\longrightarrow} & Website
\end{array} \]
This ends the trace $\tau$ which verifies that it is possible for $Website$ to receive a term of type $INFO$.

\subsection{Note}
\label{section:note}

In Fig. \ref{fig:coppa2}, we have had to modify the agents from Fig. \ref{fig:coppa1}.  The agent $\mathit{Parent}$ needs to be able to output
messages of type $\mathit{POLICY}$, and $\mathit{Website}$ needs to be able to output messages of type $\mathit{CONSENT}$.
We believe these would be minor changes in practice.  However, this is still unfortunate, because as discussed in the Introduction, we want our
algorithms to apply in cases in which we are unable to change the source code of the agents in $\mathcal{A}$.

In practice, we could implement this by allowing $\mathit{IParent}$ to send $\mathit{POLICY}$ to $\mathit{OParent}$, and $\mathit{IWebsite}$ to send $\mathit{POLICY}$ to
$\mathit{OWebsite}$, and adding the following local constraints to their behaviour:

\begin{itemize}
 \item If $\mathit{IParent}$ sends $t : \mathit{POLICY}$ to $\mathit{OParent}$, then $\mathit{IParent}$ must previously have sent $t$ to $\mathit{Parent}$.
 \item If $\mathit{IWebsite}$ sends $t : \mathit{POLICY}$ to $\mathit{OWebsite}$, then $\mathit{IWebsite}$ must previously have sent $t$ to $\mathit{Website}$.
\end{itemize}

Obtaining a formal proof of correctness for this construction requires an architecture language in which this sort of local constraint can be expressed, and we leave this for future work.

\section{Related Work}
\label{section:related}

Le M\'{e}tayer et al. \cite{pbd,privacyarchitectures,rdp,lda} have described several languages for describing architectures and deciding privacy-related properties
over them.  Barth et al. \cite{contextualintegrity} also give a formal definition of architectures, and show how to decide properties defined in temporal logic.  Our work
was heavily inspired by these systems; however, our aim was to give a method to design an architecture starting from a set of privacy properties, and not to decide whether a property
holds of a given architecture.

Basin et al. \cite{basinetal} show how to describe privacy policies in metric first-order temporal logic (MFOTL), and how to build a component that monitors in real-time whether 
these policies are being violated.  Nissenbaum et al \cite{contextualintegrity} also describe privacy policies using linear temporal logic (LTL), and this has inspired a lot
of research into systems such as P-RBAC, which enforces low-level privacy-related conditions at run-time \cite{ni2010privacy}.  Most of this research has concentrated on verifying
at run-time whether or not a given action is permitted by a given set of privacy policies.  The work presented here concentrates instead on design-time, and ensures that a high-level privacy
policy is followed, no matter what actions each individual component performs with the data it receives, as long as all messages follow the given type system.

Jeeves \cite{YangYS12} is a constraint functional language motivated by
separating  business logic and confidentially concerns. We could
implement our (architectural) constraints in Jeeves, but would no longer
have static guarantees. Other work in formal methods for privacy
includes static analysis of source code
\cite{cortesi2015datacentric,ferrara2015morphdroid}
and refinement techniques for deriving low-level designs from high-level
designs in a way that preserves privacy properties
\cite{alur2006preserving,clarkson2010hyperproperties}.
These approaches complement ours well, addressing properties for
individual components that cannot be expressed in our constraint
language, while our
algorithms provide formal guarantees of global properties of the system
as a whole.

Other work in formal methods for privacy has tended to concentrate either on static analysis of source code \cite{ferrara2015morphdroid,cortesi2015datacentric}
or on refinement techniques for deriving low-level designs from high-level designs in a way that preserves privacy properties \cite{alur2006preserving,clarkson2010hyperproperties}.
These approaches should complement ours well, providing formal guarantees for individual components of properties that cannot be expressed in our constraint language, while our
algorithms provide formal guarantees of global properties of the system as a whole.

\section{Conclusion}
\label{section:conclusion}

We have given two algorithms which take an architecture, and a set of constraints on that architecture, and show how the architecture may be extended in such a way that we can produce a formal proof that the negative constraints hold on every trace through the architecture, and the positive constraints are satisfiable.
Moreover, we do not need to read or modify the source code of the components from the original architecture in order to do this.
%From our Theorems \ref{theorem:safeone} and \ref{theorem:two}, an independent third party only needs to verify that every message passed between the parts
%of the architecture has type $C_\alpha(A)$ or $P_\alpha(A)$, and they can be sure that the negative constraints hold; they do not need to be able to see inside any of the
%components.
We believe this is a promising approach to designing large, complex systems, with many different parts designed and maintained by different people,
such that we can provide a formal proof of privacy-relevant properties.

For the future, we wish to expand the language that may be used for our constraints, for example by allowing the designer to express constraints using propositional, predicate
or temporal logic.  We hope then to express other properties that are desirable for privacy, such as the obligation to delete data.  This will require in turn expanding our type systems $\mathcal{T}_{\mathbb{C}}$.  We also plan to construct a prototype implementation of the interfaces
described in this paper.

\bibliography{privacy_by_design}

\end{document}